\newcommand{\state}{x}
\newcommand{\sset}{\mathcal{X}}
\newcommand{\sdim}{n_{\state}}
\newcommand{\ctrl}{u}
\newcommand{\ctrlseq}{\mathbf{\ctrl}}
\newcommand{\cset}{\mathcal{U}}
\newcommand{\csetsafe}{\cset_{\text{s}}}
\newcommand{\controller}{\pi}
\newcommand{\cdim}{n_{\ctrl}}
\newcommand{\dstb}{d}
\newcommand{\ddim}{n_{\dstb}}
\newcommand{\tfunc}{l}
\newcommand{\safeset}{\mathcal{S}}
\newcommand{\vfunc}{V}
\newcommand{\vfsafe}{\vfunc_s}
\newcommand{\cost}{J}
\newcommand{\costsafety}{J_s}
\newcommand{\traj}{\xi}
\newcommand{\tvar}{t}
\newcommand{\tend}{T}
\newcommand{\tdummy}{\tau}
\newcommand{\tinit}{0}
\newcommand{\thor}{[\tinit,\tend]}
\newcommand{\dyn}{f}
\newcommand{\nbhd}{\mathcal{N}}
\newcommand{\ball}{\mathcal{B}}
\newcommand{\real}{\mathbb{R}}
\newcommand{\interior}[1]{{#1}^{\mathrm{o}}}
\newtheorem{problem}{Problem}
\newtheorem{theorem}{Theorem}
\newtheorem{proposition}{Proposition}
\newtheorem{note}{Note}
\newtheorem{definition}{Definition}
\DeclareMathOperator*{\argmin}{argmin}
\def\BibTeX{{\rm B\kern-.05em{\sc i\kern-.025em b}\kern-.08em
    T\kern-.1667em\lower.7ex\hbox{E}\kern-.125emX}}
\begin{document}
\title{Cooptimizing Safety and Performance with a Control-Constrained Formulation}
\author{Hao Wang$^{1*}$, Adityaya Dhande$^{2*}$, and Somil Bansal$^{1,3}$, \IEEEmembership{Member, IEEE}
\thanks{Manuscript received September 10, 2024; revised November 16, 2024; accepted November 27, 2024. This work is supported in part by the NSF CAREER Program under award 2240163, the DARPA ANSR program, and the IUSSTF-Viterbi program.(Corresponding author: Hao Wang)}
\thanks{*The authors contributed equally to this work.}
\thanks{$^{1}$The authors are associated with the Ming Hsieh Department of Electrical and Computer Engineering, University of Southern California.
        {(email: \tt\small haowwang@usc.edu)}}%
\thanks{$^{2}$The author is associated with the Department of Electrical Engineering, Indian Institute of Technology Bombay.
        {(email: \tt\small adityaya@iitb.ac.in)}}%
\thanks{$^{3}$The author is associated with the Aeronautics and Astronautics Department, Stanford University.
        {(email: \tt\small somil@stanford.edu)}}%
}
\pagestyle{empty}

\maketitle
\thispagestyle{empty}

\begin{abstract}
Autonomous systems have witnessed a rapid increase in their capabilities, but it remains a challenge for them to perform tasks both effectively and safely. The fact that performance and safety can sometimes be competing objectives renders the cooptimization between them difficult. One school of thought is to treat this cooptimization as a constrained optimal control problem with a performance-oriented objective function and safety as a constraint. However, solving this constrained optimal control problem for general nonlinear systems remains challenging. In this work, we use the general framework of constrained optimal control, but given the safety state constraint, we convert it into an equivalent \emph{control constraint}, resulting in a \emph{state and time-dependent control-constrained optimal control problem}. This equivalent optimal control problem can readily be solved using the dynamic programming principle. We show the corresponding value function is a viscosity solution of a certain Hamilton-Jacobi-Bellman Partial Differential Equation (HJB-PDE). Furthermore, we demonstrate the effectiveness of our method with a two-dimensional case study, and the experiment shows that the controller synthesized using our method consistently outperforms the baselines, both in safety and performance. The implementation of the case study can be found on the project website \footnote{\url{https://github.com/haowwang/cooptimize_safety_performance}}.
\end{abstract}

\begin{IEEEkeywords}
Autonomous systems, Optimal control, Robotics
\end{IEEEkeywords}

\section{Introduction}

\IEEEPARstart{P}ERFORMANCE and safety are two crucial factors we must consider when designing algorithms for autonomous systems. Clearly, we would like the systems to be effective in performing useful tasks. At the same time, the systems must satisfy safety requirements so that they do not inflict damage or injury. As a result, these two factors must be considered simultaneously when we are designing control algorithms. 

From the optimal control point of view, the existing methods can roughly be divided into two categories based on whether the safety requirement is posed as a constraint or objective in the optimization problem. Semantically, the latter means that safe behaviors are encouraged but not enforced. A large number of data-driven techniques \cite{bharadhwaj2020_conservative_safety_critic, srinivasan2020_safe_critic} fall into this category. One drawback of these techniques is that they do not provide any safety guarantees. 

The methods that treat the safety requirement as a constraint can be subdivided into two categories based on whether the safety requirement is considered simultaneously with the performance objective. One popular family of methods is safety filtering \cite{ames17cbfqp, Wabersich23filtering, hsu23filtering, borquez23filtering}, which provides safety-preserving interventions when necessary in runtime. They generally lead to myopic and suboptimal behaviors as, by design, the safety requirement is often not considered during the performance controller synthesis. 

On the other hand, one could formulate the problem as a state-constrained optimal control problem. With this framework, we can optimize the performance objective within the confines of the safety requirement and synthesize controllers that cooptimize safety and performance. However, state-constrained optimal control problems are notoriously difficult to solve using the dynamic programming principle unless certain controllability assumptions are satisfied \cite{soner1986_state_constrained_oc, capuzzo1990_hj_state_cons}. Alternatively, Model Predictive Control (MPC) techniques \cite{garcia1989_mpc, mayne2000_state_cons_mpc} have also been used to solve this problem. However, it is difficult to achieve optimality when the underlying problem involves nonlinear dynamics and/or non-convex state constraints. Recently, the authors in \cite{altarovici13} proposed a new framework to circumvent the controllability assumptions by characterizing the epigraph of the value function of the state-constrained optimal control problem. Though theoretically attractive, the method increases the dimensionality of the underlying optimal control problem and, in practice, is susceptible to several numerical challenges, as we demonstrate later in this paper.

In this work, we pose the problem of cooptimizing safety and performance as a state-constrained optimal control problem. Our key idea for overcoming the aforementioned challenges associated with state-constrained optimal control problems is to convert the state constraint into a control constraint using Hamilton-Jacobi reachability analysis. This results in an equivalent optimal control problem free of state constraints and can readily be solved using dynamic programming. We prove that the corresponding value function is a viscosity solution of a certain HJB-PDE, and it can be computed using existing Level Set methods and packages.

To summarize, the contribution of this letter is two-fold: 1) we propose a systematic way of converting a state-constrained optimal control problem into a \emph{control-constrained} optimal control problem and prove that two problems are equivalent, and 2) we show that the value function of the control-constrained optimal control problem is a viscosity solution to a final-value problem for a certain HJB-PDE. 

\section{Problem Formulation}\label{sec:prob_formulation}
In this work, we are interested in synthesizing controllers that optimize performance objectives for the given system while respecting the imposed safety constraint. We consider deterministic, continuous-time, and control-affine systems, governed by the ordinary differential equation $\frac{d\state}{d\tvar} = \dyn(\state,\ctrl) = \dyn_1(\state) + \dyn_2(\state)\ctrl$, where $\state\in\sset\subseteq\real^{\sdim}$ and $\ctrl\in\cset\subseteq\real^{\ddim}$ are the state and control of the system. We assume $\dyn$ is bounded and Lipschitz. We further assume the control space $\cset$ is convex. 

Let $r:\sset \times\cset \rightarrow \real$ and $\phi:\sset \rightarrow \real$ be the running cost over finite time horizon $[0,T)$ and final cost encoding the performance objectives. We assume both $r(\state,\ctrl)$ and $\phi(\state)$ are bounded and Lipschitz, and we further assume that $r(\state, \ctrl)$ is convex in $\ctrl$. Furthermore, the safety constraint is given by $\tfunc(\state)\geq 0 \ \forall \state\in\sset$, where $\tfunc$ is Lipschitz but is not required to be convex. 

We formalize the problem of interest as a state-constrained optimal control problem in Prob. \ref{prob:state_constrained_opt_ctrl_prob}. Let us use $\traj_{\state,\tvar}^{\ctrlseq}:[\tvar,\tend]\rightarrow\sset$ to denote the state trajectory starting from state $\state$ at time $\tvar$ evolved with control signal $\ctrlseq:[\tvar,\tend)\rightarrow \cset$. With a slight abuse of the notation, we use $\traj_{\state,\tvar}^{\ctrlseq}(\tdummy)$ to denote the state at time $\tdummy\geq\tvar$ along the trajectory $\traj_{\state,\tvar}^{\ctrlseq}$.
\begin{problem}[State-Constrained Optimal Control Problem]\label{prob:state_constrained_opt_ctrl_prob}
\begin{subequations}
\begin{align}
    \begin{split}\label{eq:state_constrained_oc_cost}
    &\inf_{\ctrlseq} \quad \cost(\state, \tvar,\ctrlseq) = \int_\tvar^\tend r(\traj_{\state,\tvar}^{\ctrlseq}(\tdummy), \ctrlseq(\tdummy))  d\tdummy \\ 
    & \qquad \qquad \qquad \qquad \qquad \ \ \ + \phi(\traj_{\state,\tvar}^{\ctrlseq}(\tend)) \\
    \end{split}\\
    &s.t.  \quad \frac{d}{d\tdummy}\traj_{\state,\tvar}^{\ctrlseq}(\tdummy) = \dyn(\traj_{\state,\tvar}^{\ctrlseq}(\tdummy), \ctrlseq(\tdummy))  \ \forall \tdummy \in [\tvar,\tend) \\
    & \qquad \ \tfunc(\traj_{\state,\tvar}^{\ctrlseq}(\tdummy))\geq 0  \ \forall \tdummy \in [\tvar,\tend] \label{eq:state_constraint}
    \end{align}
\end{subequations}
\end{problem}

Our goal in this work is finding the state-feedback controller $\controller^*:\sset\times[\tvar,\tend) \rightarrow \cset$ that solves Prob. \ref{prob:state_constrained_opt_ctrl_prob} at each state $\state\in\sset$ and time $\tvar\in\thor$. Since solving Prob. \ref{prob:state_constrained_opt_ctrl_prob} is challenging, we will present an equivalent optimal control problem whose solution is $\controller^*$.

\section{Background: Hamilton-Jacobi Reachability Analysis}

In this section, we provide a brief overview of Hamilton-Jacobi (HJ) reachability analysis, an approach we use to convert the state constraint \eqref{eq:state_constraint} into a control constraint. Given a state constraint \eqref{eq:state_constraint}, we use HJ reachability to determine the \emph{safe set} $\safeset$, the set of state $\state$ and time $\tvar$ starting from which the system can satisfy \eqref{eq:state_constraint} over time horizon $[\tvar,\tend]$. The construction of $\safeset$ is formulated as a minimum cost optimal control problem \cite{lygeros04,fialho94vi} with the cost functional $ \costsafety(\state, \tvar, \ctrlseq) =  \min_{\tdummy \in [\tvar,\tend]} \tfunc(\traj_{\state,\tvar}^{\ctrlseq}(\tdummy))$. The \emph{safety value function} at state $\state$ and time $\tvar$ is defined as 
\begin{equation}\label{eq:safety_value_function}
    \vfsafe(\state, \tvar) = \sup_{\ctrlseq} \costsafety(\state, \tvar, \ctrlseq) = \sup_{\ctrlseq} \min_{\tdummy \in [\tvar,\tend]} \tfunc(\traj_{\state,\tvar}^{\ctrlseq}(\tdummy))
\end{equation}
Then, the safe set $\safeset$ can be characterized using $\vfsafe(\state,\tvar)$ as $\safeset = \{(\state,\tvar) \in \sset\times\thor| \vfsafe(\state,\tvar) \geq 0\}$.

HJ reachability analysis provides a tractable means to compute the safety value function $\vfsafe(\state,\tvar)$. It has been shown that $\vfsafe(\state,\tvar)$ is the viscosity solution of the Hamilton-Jacobi-Bellman Variational Inequality (HJB-VI) \cite{lygeros04,fialho94vi}:
\begin{equation} \label{eq:HJBVI}
    \begin{aligned}
    &\min \biggl\{\frac{\partial\vfsafe}{\partial \tvar} + \max_{\ctrl\in\cset} \{\frac{\partial\vfsafe}{\partial \state}^\top \dyn(\state,\ctrl)\}, \tfunc(\state) - \vfsafe(\state, \tvar) \biggr\} = 0\\
    & \forall \state\in\sset \ \text{and} \ \forall\tvar\in[\tinit,\tend),  \, \vfsafe(\state, \tend) = \tfunc(\state) \ \forall \state\in\sset
    \end{aligned}
\end{equation}

\section{Method}
At its core, our method converts the state-constrained optimal control problem (Prob. \ref{prob:state_constrained_opt_ctrl_prob}) into a \emph{state and time-dependent control-constrained} optimal control problem, by explicitly characterizing the set of controls that leads the system to satisfy the state constraint, referred to as the \emph{set of safe controls}, at each state $\state$ and time $\tvar$. We first formalize the notion of set of safe controls and use it to formulate the state and time-dependent control-constrained optimal control problem (Prob. \ref{prob:ctrl_constrained_opt_ctrl_prob}). We then show Prob. \ref{prob:ctrl_constrained_opt_ctrl_prob} is equivalent to Prob. \ref{prob:state_constrained_opt_ctrl_prob}. Subsequently, we show the value function of Prob. \ref{prob:ctrl_constrained_opt_ctrl_prob} is a viscosity solution of a final-value problem for a certain HJB-PDE. Finally, we show one specific way of constructing the set of safe controls using HJ reachability analysis. 

\vspace{-0.5em}
\subsection{State and Time-Dependent Control-Constrained Optimal Control Problem}

We first provide the definition of the set of safe controls, inspired by a similar notion in \cite{borquez23filtering}.

\begin{definition}[Set of Safe Controls]\label{def:set_of_safe_controls}
    The set of safe controls at state $\state$ and time $\tvar$, denoted by $\csetsafe(\state,\tvar)$, is the set of controls that can instantaneously keep the system within the safe set $\safeset$. More precisely, 
    \vspace{-0.5em}
    \begin{equation}
         \csetsafe(\state,\tvar) = \{\ctrl\in\cset | \lim_{\epsilon\rightarrow 0} \vfsafe(\traj_{\state,\tvar}^\ctrl(\tvar+\epsilon), \tvar+\epsilon) \geq 0\}
    \end{equation}
    A set of safe controls is maximal if it contains all other sets of safe control, and we denote the maximal set of safe control by $\csetsafe^*$. 
\end{definition}

\begin{note}
    $\csetsafe(\state,\tvar)$ and $\csetsafe^*(\state,\tvar)$ can both be seen as set-value maps from $\sset\times[\tinit,\tend)$ to $\cset$.
\end{note}

The state and time-dependent control-constrained optimal control problem is presented below in Prob. \ref{prob:ctrl_constrained_opt_ctrl_prob}. It is worthwhile to note that Prob. \ref{prob:ctrl_constrained_opt_ctrl_prob} is identical to Prob. \ref{prob:state_constrained_opt_ctrl_prob}, only with the state constraint \eqref{eq:state_constraint} replaced by the control constraint \eqref{eq:ctrl_constraint} in Prob. \ref{prob:ctrl_constrained_opt_ctrl_prob}. We will also show that the optimal value of Prob. \ref{prob:ctrl_constrained_opt_ctrl_prob} is identical to that of Prob. \ref{prob:state_constrained_opt_ctrl_prob} for any state $\state\in\sset$ and time $\tvar\in\thor$, in Theorem. \ref{thm:problems_equivalence}. 

\begin{problem}[Control-Constrained Optimal Control Problem] 
\label{prob:ctrl_constrained_opt_ctrl_prob}
\begin{subequations}
    \begin{align}
    \begin{split}
        &\inf_{\ctrlseq}  \quad \cost(\state, \tvar,\ctrlseq) = \int_{\tvar}^\tend r(\traj_{\state,\tvar}^{\ctrlseq}(\tdummy), \ctrlseq(\tdummy)) d\tdummy \\ 
         &\qquad \qquad \qquad \qquad \qquad \qquad +\phi(\traj_{\state,\tvar}^{\ctrlseq}(\tend))  \\
    \end{split} \\
    & s.t.   \quad \frac{d}{d\tdummy}\traj_{\state,\tvar}^{\ctrlseq}(\tdummy) = \dyn(\traj_{\state,\tvar}^{\ctrlseq}(\tdummy), \ctrlseq(\tdummy))  \ \forall \tdummy \in [\tvar,\tend] \\
    & \quad \quad \ \ctrlseq(\tdummy) \in \csetsafe^*(\traj_{\state,\tvar}^{\ctrlseq}(\tdummy), \tau) \ \forall \tdummy \in [\tvar,\tend] \label{eq:ctrl_constraint}
    \end{align}
\end{subequations}
\end{problem}

\begin{theorem}\label{thm:problems_equivalence}
Let us denote the optimal value of Prob. \ref{prob:state_constrained_opt_ctrl_prob} and Prob. \ref{prob:ctrl_constrained_opt_ctrl_prob}, at state $\state\in\sset$ and time $\tvar\in\thor$, by $\vfunc_1(\state,\tvar)$ and $\vfunc(\state,\tvar)$. Then $\vfunc_1(\state,\tvar) = \vfunc(\state,\tvar) \ \forall \state\in\sset \ \text{and} \ \forall\tvar\in\thor$. 

\end{theorem}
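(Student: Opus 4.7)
The plan is to establish the equivalence by showing that Prob.~\ref{prob:state_constrained_opt_ctrl_prob} and Prob.~\ref{prob:ctrl_constrained_opt_ctrl_prob} admit the same set of feasible control signals at each $(\state,\tvar)\in\sset\times\thor$. Since the cost functionals and dynamics constraints in the two problems are identical, equality of the feasible sets immediately yields $\vfunc_1(\state,\tvar)=\vfunc(\state,\tvar)$. I would route this equivalence through the safe set $\safeset$ in two successive steps.

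In the first step, I would show that a trajectory $\traj_{\state,\tvar}^{\ctrlseq}$ satisfies $\tfunc(\traj_{\state,\tvar}^{\ctrlseq}(\tdummy))\geq 0$ on $[\tvar,\tend]$ if and only if $(\traj_{\state,\tvar}^{\ctrlseq}(\tdummy),\tdummy)\in\safeset$ for all $\tdummy\in[\tvar,\tend]$. The forward direction holds because the trajectory itself is an admissible candidate in the supremum defining $\vfsafe$ in \eqref{eq:safety_value_function}, giving $\vfsafe(\traj_{\state,\tvar}^{\ctrlseq}(\tdummy),\tdummy)\geq\min_{\tdummy'\in[\tdummy,\tend]}\tfunc(\traj_{\state,\tvar}^{\ctrlseq}(\tdummy'))\geq 0$. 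The reverse direction follows from the observation that the inner minimum in \eqref{eq:safety_value_function} always includes $\tdummy'=\tdummy$, so $\vfsafe(\state',\tdummy)\leq\tfunc(\state')$ holds pointwise, and therefore $\vfsafe\geq 0$ along the trajectory forces $\tfunc\geq 0$ along it.

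In the second step, I would show that the trajectory remains in $\safeset$ on $[\tvar,\tend]$ if and only if $\ctrlseq(\tdummy)\in\csetsafe^*(\traj_{\state,\tvar}^{\ctrlseq}(\tdummy),\tdummy)$ for all $\tdummy\in[\tvar,\tend]$. The forward direction is essentially Definition~\ref{def:set_of_safe_controls}: if $\vfsafe\geq 0$ is maintained at every point of the trajectory, then the instantaneous limit condition $\lim_{\epsilon\to 0}\vfsafe(\traj_{\state,\tvar}^{\ctrlseq}(\tdummy+\epsilon),\tdummy+\epsilon)\geq 0$ holds at each $\tdummy$, so $\ctrlseq(\tdummy)\in\csetsafe^*(\traj_{\state,\tvar}^{\ctrlseq}(\tdummy),\tdummy)$. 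For the reverse direction, I would proceed by contradiction: if the trajectory were to exit $\safeset$ at some first time $\tdummy^\star$, then continuity of $\vfsafe$ (inherited from Lipschitz regularity of $\tfunc$ and $\dyn$) together with continuity of the trajectory should contradict the safe-control condition satisfied at times just before $\tdummy^\star$.

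The main obstacle I anticipate is this reverse direction of the second step. The condition defining $\csetsafe^*$ is infinitesimal and pointwise in $\tdummy$, whereas remaining in $\safeset$ along the trajectory is a global property, and reconciling the two requires care. A further subtlety is that Definition~\ref{def:set_of_safe_controls} evaluates $\vfsafe$ along a trajectory driven by a single constant control $\ctrl$, while Prob.~\ref{prob:ctrl_constrained_opt_ctrl_prob} uses the full open-loop signal $\ctrlseq$; bridging the two should be possible via a first-order Taylor expansion of the dynamics combined with continuity of $\vfsafe$, but this is where I expect the technical work to concentrate.
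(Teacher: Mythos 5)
Your proposal is correct and reaches the conclusion by essentially the same road as the paper --- both arguments route the equivalence through the safe set $\safeset$ and the two elementary properties of $\vfsafe$ you isolate (that $\vfsafe \geq \min_{\tdummy} \tfunc$ along any admissible trajectory, and that $\vfsafe \leq \tfunc$ pointwise) --- but the decomposition is genuinely different and worth comparing. The paper compares optimal values directly: it asserts in one sentence that any control feasible for Prob.~\ref{prob:ctrl_constrained_opt_ctrl_prob} keeps the system in $\safeset$ and hence satisfies the state constraint (giving $\vfunc_1 \leq \vfunc$), and then spends its effort on a contradiction argument showing that the optimal solution $\ctrlseq_1^*$ of Prob.~\ref{prob:state_constrained_opt_ctrl_prob} satisfies the control constraint (giving $\vfunc \leq \vfunc_1$). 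You instead prove that the two feasible sets coincide, which is slightly stronger and has the advantage of not presupposing that minimizers exist. More importantly, your decomposition relocates the difficulty to the right place: the direction you flag as hard --- that the pointwise-in-time, infinitesimal condition $\ctrlseq(\tdummy)\in\csetsafe^*(\traj_{\state,\tvar}^{\ctrlseq}(\tdummy),\tdummy)$ implies the trajectory never leaves $\safeset$ --- is exactly the step the paper dispatches in a single sentence (``keeps the system within the safe set instantaneously \dots by definition''), and it genuinely requires the Nagumo-type first-exit-time argument you sketch, together with the bridge you identify between the constant-control trajectory appearing in Definition~\ref{def:set_of_safe_controls} and the actual open-loop signal $\ctrlseq$. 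Conversely, the direction the paper labors over is, in your decomposition, the straightforward composition of two forward implications that follow directly from the definition of $\vfsafe$. So the two proofs have the same substance, but yours is the more careful accounting of where the real content lies; to finish it you would need to make the invariance step rigorous (e.g., at a Lebesgue point of $\ctrlseq$ just before the first exit time, compare the actual trajectory to the frozen-control trajectory to first order and derive a contradiction with the instantaneous safety condition).
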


\begin{proof}
    Take an initial state $\state\in\sset$ and initial time $\tvar\in\thor$. Let us denote the solutions to Prob. \ref{prob:state_constrained_opt_ctrl_prob} and Prob. \ref{prob:ctrl_constrained_opt_ctrl_prob}, from $\state$ and $\tvar$, by $\ctrlseq_1^*$ and $\ctrlseq^*$, respectively.  
    
    The system never violates the state constraint \eqref{eq:state_constraint} if $\ctrlseq^*$ is applied over $[\tvar,\tend)$, because $\ctrlseq^*(\tau)$ keeps the system within the safe set $\safeset$ instantaneously for all time $\tau\in[\tvar,\tend]$ by definition of the set of safe controls. With this fact established, we can now compare $\vfunc_1(\state,\tvar)$ and $\vfunc(\state,\tvar)$. 

    \textit{Case 1: $\tvar=\tend$}. In this case, $\vfunc_1(\state,\tend) = \vfunc(\state,\tend) = \phi(\state)$.
    
    \textit{Case 2: $\tvar\in[\tinit,\tend)$}. By definition of the state-constrained optimal control problem, we have $\vfunc_1(\state,\tvar) \leq \vfunc(\state,\tvar) \ \forall \state\in\sset \ \text{and} \ \forall\tvar\in[\tinit, \tend)$. 

    Now we would like to prove $\vfunc(\state,\tvar) \leq \vfunc_1(\state,\tvar) \ \forall \state\in\sset \ \text{and} \ \forall\tvar\in[\tinit, \tend)$. Before proceeding, we will establish the fact that $\ctrlseq^*_1$ satisfies the control constraint \eqref{eq:ctrl_constraint} for all $\tdummy\in[\tvar,\tend)$. Suppose that is not the case. Then $\exists \tdummy\in[\tvar,\tend)$ such that $\ctrlseq^*_1(\tdummy) \notin \csetsafe(\traj_{\state,\tvar}^{\ctrlseq_1^*}(\tdummy), \tdummy)$. As a result, $\lim_{\epsilon\rightarrow 0}\vfsafe(\traj_{\state,\tvar}^{\ctrlseq_1^*}(\tdummy+\epsilon), \tdummy+\epsilon) < 0$. By definition of the safety value function $\vfsafe(\state,\tvar)$ \eqref{eq:safety_value_function}, the system would certainly violate the state constraint at some point over the time horizon $[\tdummy+\epsilon,\tend]$. However, $\ctrlseq^*_1$ is a solution of Prob. \ref{prob:state_constrained_opt_ctrl_prob} and hence will not lead the system to violate the state constraint over the time horizon $[\tvar,\tend]$. We have reached a contradiction, and therefore $\ctrlseq^*_1(\tdummy)$ satisfies the control constraint \eqref{eq:ctrl_constraint} for all $\tdummy\in[\tvar,\tend)$.
    
    Take solution $\ctrlseq^*_1$ of Prob. \ref{prob:state_constrained_opt_ctrl_prob} at initial state $\state$ and time $\tvar$. Since $\ctrlseq^*_1(\tdummy)$ satisfies the control constraint \eqref{eq:ctrl_constraint} for all $\tdummy\in[\tvar,\tend)$, $\ctrlseq^*_1$ is feasible for Prob. \ref{prob:ctrl_constrained_opt_ctrl_prob}. Therefore, $\vfunc(\state,\tvar)\leq\vfunc_1(\state,\tvar)$. 

    Hence, we have shown that $\vfunc_1(\state,\tvar) = \vfunc(\state,\tvar)$ for any state $\state\in\sset$ and time $\tvar\in\thor$.
\end{proof}

\subsection{Solving Control-Constrained Optimal Control Problem}

For the remainder of this letter, we use $\vfunc(\state,\tvar)$ to denote the value function of Prob. \ref{prob:ctrl_constrained_opt_ctrl_prob}. 
We introduce the following result regarding $\vfunc(\state,\tvar)$, and the proof of this result is heavily inspired by the proof of Theorem 10.2 in \cite{evans2010pde}. 

\begin{theorem}
    Assume the set-valued map $\csetsafe^*:\sset\times\thor\rightrightarrows\cset$ is lower hemicontinuous. The value function $\vfunc(\state,\tvar)$ is a viscosity solution of the following final-value problem for the HJB-PDE 
    \begin{equation}\label{eq:hjb_pde_perf}
    \begin{aligned}
        &\frac{\partial \vfunc}{\partial \tvar} + \min_{\ctrl\in\csetsafe^*(\state,\tvar)}\{\dyn(\state,\ctrl)\top \frac{\partial\vfunc}{\partial\state} + r(\state,\ctrl)\} = 0 \\
        &\forall \state\in\sset \ \text{and} \  \forall \tvar\in [\tinit,\tend), \vfunc(\state,\tend) = \phi(\state) \ \forall \state\in\sset
    \end{aligned}
\end{equation}
\end{theorem}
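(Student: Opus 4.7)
The plan is to mirror the Evans-style argument (Theorem 10.2 of \cite{evans2010pde}) in three stages: (i) a dynamic programming principle (DPP) for $\vfunc$ along with continuity, (ii) the subsolution viscosity inequality, and (iii) the supersolution viscosity inequality. The final condition $\vfunc(\state,\tend)=\phi(\state)$ holds trivially from Prob.~\ref{prob:ctrl_constrained_opt_ctrl_prob}. Throughout, lower hemicontinuity of $\csetsafe^*$ is the ingredient that makes the classical calculus go through under the state- and time-dependent constraint \eqref{eq:ctrl_constraint}: it guarantees that a safe control at $(\state_0,\tvar_0)$ can be approximated by feasible controls at nearby points. The DPP itself --- for $\tvar\in[\tinit,\tend)$ and small $h>0$,
\begin{equation*}
\vfunc(\state,\tvar) = \inf_{\ctrlseq}\Bigl\{\int_{\tvar}^{\tvar+h} r(\traj_{\state,\tvar}^{\ctrlseq}(\tdummy),\ctrlseq(\tdummy))\,d\tdummy + \vfunc(\traj_{\state,\tvar}^{\ctrlseq}(\tvar+h),\tvar+h)\Bigr\},
\end{equation*}
follows by the usual concatenation/splitting argument since feasibility in \eqref{eq:ctrl_constraint} is pointwise. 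Continuity of $\vfunc$ then follows from Lipschitzness of $\dyn,r,\phi$ combined with the hemicontinuity assumption, which lets us transport feasible control signals between nearby initial conditions.

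For the subsolution step, let $\psi\in C^1$ with $\vfunc-\psi$ attaining a local maximum at $(\state_0,\tvar_0)$, $\tvar_0<\tend$. Fix any $\ctrl\in\csetsafe^*(\state_0,\tvar_0)$, and use lower hemicontinuity to produce a measurable feasible $\ctrlseq_h$ on $[\tvar_0,\tvar_0+h]$ with $\ctrlseq_h\to\ctrl$ uniformly as $h\to 0^+$. Plugging $\ctrlseq_h$ into the DPP, exploiting the local max condition, dividing by $h$ and sending $h\to 0^+$ yields $\partial_\tvar\psi(\state_0,\tvar_0) + \dyn(\state_0,\ctrl)^\top\partial_\state\psi(\state_0,\tvar_0) + r(\state_0,\ctrl)\ge 0$; taking the infimum over $\ctrl\in\csetsafe^*(\state_0,\tvar_0)$ completes this direction. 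For the supersolution step, with $\vfunc-\psi$ having a local minimum at $(\state_0,\tvar_0)$, I would pick an $h^2$-near-infimizing feasible $\ctrlseq_h$ in the DPP, Taylor-expand $\psi$ along $\traj_{\state_0,\tvar_0}^{\ctrlseq_h}$, and lower-bound the integrand pointwise by $\min_{\ctrl\in\csetsafe^*(\cdot,\cdot)}\{\dyn(\cdot,\ctrl)^\top\partial_\state\psi + r(\cdot,\ctrl)\}$. Passing $h\to 0^+$ then requires this parametric minimum to be upper semicontinuous in $(\state,\tvar)$, which is the consequence of lower hemicontinuity of $\csetsafe^*$ delivered by Berge's maximum theorem (using convexity of $\cset$ and continuity of the integrand in $\ctrl$).

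The main obstacle is the feasibility construction in the subsolution step: one needs a measurable control signal $\ctrlseq_h$ that stays close to a prescribed $\ctrl$ while remaining in $\csetsafe^*$ along the trajectory it itself generates. Lower hemicontinuity is the minimal hypothesis permitting such a Michael-type selection on a short interval, and pinning down the regularity --- measurability, pointwise feasibility, and $L^\infty$-proximity sufficient for the Taylor linearization --- is the delicate step that distinguishes this proof from the classical case with a fixed control set.
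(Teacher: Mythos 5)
Your overall route is the same as the paper's: establish a dynamic programming principle for the constrained value function, then run the two Evans-style test-function arguments, invoking lower hemicontinuity of $\csetsafe^*$ exactly where the paper does --- to transport a fixed $\ctrl_0\in\csetsafe^*(\state_0,\tvar_0)$ into a feasible control signal on a short interval whose values stay near $\ctrl_0$. Your local-maximum direction is correct and is essentially the paper's argument run directly rather than by contradiction (note only that the paper attaches the label ``supersolution'' to the local-maximum case and ``subsolution'' to the local-minimum case, the reverse of your naming; this is a harmless final-value-problem convention).

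The genuine gap is in the local-minimum direction. After inserting an $h^2$-near-infimizing feasible control into the DPP and lower-bounding the integrand pointwise by $m(\state,\tvar):=\min_{\ctrl\in\csetsafe^*(\state,\tvar)}\{\partial_{\tvar}\psi+\partial_{\state}\psi^\top\dyn(\state,\ctrl)+r(\state,\ctrl)\}$, you are left with an inequality of the form $h\geq\tfrac{1}{h}\int_{\tvar_0}^{\tvar_0+h}m\bigl(\traj_{\state_0,\tvar_0}^{\ctrlseq_h}(\tdummy),\tdummy\bigr)\,d\tdummy$, and to conclude $0\geq m(\state_0,\tvar_0)$ you need the $\liminf$ of that average to dominate $m(\state_0,\tvar_0)$, i.e., you need $m$ to be \emph{lower} semicontinuous at $(\state_0,\tvar_0)$. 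Berge's theorem derives lower semicontinuity of a constrained minimum from \emph{upper} hemicontinuity (with compact values) of the constraint map; lower hemicontinuity yields only \emph{upper} semicontinuity of $m$, which bounds the average from the wrong side and gives nothing here. So the mechanism you cite does not close this step as stated. To be fair, this same direction is also the soft spot of the paper's proof: there, condition \eqref{eq:viscosity_proof_subsolution_condition} is established only for controls near a single $\ctrl_0\in\csetsafe^*(\state_0,\tvar_0)$, yet the displayed chain applies it to the near-optimal control supplied by the DPP, which need not take values near $\csetsafe^*(\state_0,\tvar_0)$ unless the feasible sets at nearby states do not suddenly enlarge --- again an upper-hemicontinuity-type requirement. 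You have localized the difficulty correctly; the fix is either to strengthen the hypothesis on $\csetsafe^*$ (continuity rather than lower hemicontinuity alone) or to verify lower semicontinuity of $m$ directly for the concrete construction in \eqref{eq:set_of_safe_controls_hj}.
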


\begin{proof}
    The continuity of $\vfunc(\state,\tvar)$ can be established using similar arguments presented in Lemma 10.3.3 in \cite{evans2010pde}. For brevity, we will not show $\vfunc(\state,\tvar)$ is continuous in this proof. We will first show that $\vfunc(\state,\tvar)$ is a viscosity supersolution.
    Take test function $\psi\in C^1(\sset\times\thor)$ and assume that $\vfunc - \psi$ has a local maximum at $(\state_0, \tvar_0)$. We must show that $\frac{\partial\psi}{\partial \tvar}|_{(\state_0,\tvar_0)} + \min_{\ctrl\in\csetsafe^*(\state_0,\tvar_0)} \{\frac{\partial\psi}{\partial\state}|_{(\state_0,\tvar_0)}^\top \dyn(\state_0,\ctrl_0) + r(\state_0,\ctrl)\}\geq 0$. 
    Suppose that is not the case. Then $\exists \ctrl_0\in\csetsafe^*(\state_0,\tvar_0) \ \text{and} \ \exists \theta>0$ such that 
    \begin{equation}\label{eq:viscosity_proof_supersolution_condition}
        \frac{\partial\psi}{\partial \tvar}|_{(\state_0,\tvar_0)} + \frac{\partial\psi}{\partial\state}|_{(\state_0,\tvar_0)}^\top \dyn(\state_0,\ctrl_0)+ r(\state_0,\ctrl_0)\leq-\theta<0
    \end{equation}
    
    Since $\dyn$ and $r$ are continuous in $\state$ and $\ctrl$, for $(\state,\ctrl,\tvar)$ that is sufficiently close to $(\state_0,\ctrl_0,\tvar_0)$, or equivalently 
    $||\state-\state_0||_2 + ||\ctrl-\ctrl_0||_2 + |\tvar-\tvar_0| < \delta$ for some $\delta>0$, condition \eqref{eq:viscosity_proof_supersolution_condition} holds.
    We denote the neighborhoods $||\state-\state_0||_2 + |\tvar-\tvar_0| < \frac{\delta}{2}$ and $||\ctrl - \ctrl_0||_2 <\frac{\delta}{2}$, by $\hat{\nbhd}$ and $\hat{\cset}$, respectively. 

    Take $\hat{\cset}$. $\hat{\cset} \cap \csetsafe^*(\state_0,\tvar_0)\neq\emptyset$ because $\ctrl_0\in\hat{\cset}$ and $\ctrl_0\in\csetsafe^*(\state_0,\tvar_0)$. Since by assumption $\csetsafe^*(\state,\tvar)$ is lower hemicontinuous, there exists a neighborhood $\nbhd$ of $(\state_0,\tvar_0) \ s.t. \ \forall (\state,\tvar)\in\nbhd, \csetsafe^*(\state,\tvar)\cap\hat{\cset}\neq\emptyset$. It follows immediately that $\forall (\state,\tvar)\in\hat{\nbhd}\cap\nbhd, \csetsafe^*(\state,\tvar)\cap\hat{\cset}\neq\emptyset$. Then by continuity of $\dyn$ in $\ctrl$, there exists $\tvar_e>\tvar_0$, over which we can construct a control signal $\ctrlseq^*:[\tvar_0,\tvar_e)\rightarrow\cset$, along with the resulting state trajectory $\traj_{\state_0,\tvar_0}^{\ctrlseq^*}:[\tvar_0,\tvar_e]\rightarrow\sset$, such that $(\traj_{\state_0,\tvar_0}^{\ctrlseq^*}(\tdummy), \tdummy)\in\hat{\nbhd}\cap\nbhd \ \forall \tdummy\in[\tvar_0,\tvar_e]$ and concurrently  $\ctrlseq^*(\tdummy)\in\csetsafe^*(\traj_{\state_0,\tvar_0}^{\ctrlseq^*}(\tdummy),\tdummy)\cap\hat{\cset} \ \forall \tdummy\in[\tvar_0,\tvar_e)$. By construction, $(\traj_{\state_0,\tvar_0}^{\ctrlseq^*}(\tdummy),\tdummy)\in\hat{\nbhd}$ and $\ctrlseq^*(\tdummy)\in\hat{\cset} \ \forall \tdummy\in(\tvar_0,\tvar_e)$, and hence $(\traj_{\state_0,\tvar_0}^{\ctrlseq^*}(\tdummy), \ctrlseq^*(\tdummy), \tdummy) \ \forall \tdummy\in(\tvar_0,\tvar_e)$ satisfies condition \eqref{eq:viscosity_proof_supersolution_condition}. 

    By assumption $\vfunc-\psi$ has a local maximum at $(\state_0,\tvar_0)$, we have $ \vfunc(\state, \tvar) - \psi(\state, \tvar) \leq \vfunc(\state_0, \tvar_0) - \psi(\state_0, \tvar_0) \ \forall (\state,\tvar)\in\hat{\nbhd}$. Note that from the dynamics programming principle we have $ \vfunc(\traj_{\state_0,\tvar_0}^{\ctrlseq}(\tvar_0),\tvar_0) \leq \int_{\tvar_0}^{\tvar_e} r(\traj_{\state_0,\tvar_0}^{\ctrlseq}(\tvar), \ctrlseq(\tvar)) d\tvar \  + \vfunc(\traj_{\state_0,\tvar_0}^{\ctrlseq}(\tvar_e), \tvar_e)$ for any control signal $\ctrlseq$ that satisfies the control constraint \eqref{eq:ctrl_constraint} over the time horizon $[\tvar_0, \tend)$. Making use of this fact and rearranging the equation we arrive at the following:
    \begin{equation*}
        \begin{aligned}
             &0 \leq \psi(\traj_{\state_0,\tvar_0}^{\ctrlseq^*}(\tvar_e),\tvar_e) - \psi(\traj_{\state_0,\tvar_0}^{\ctrlseq^*}(\tvar_0),\tvar_0) \\
             &\qquad \qquad - \vfunc(\traj_{\state_0,\tvar_0}^{\ctrlseq^*}(\tvar_e),\tvar_e)+ \vfunc(\traj_{\state_0,\tvar_0}^{\ctrlseq^*}(\tvar_0),\tvar_0) \\
            &\leq \psi(\traj_{\state_0,\tvar_0}^{\ctrlseq^*}(\tvar_e),\tvar_e) - \psi(\traj_{\state_0,\tvar_0}^{\ctrlseq^*}(\tvar_0),\tvar_0) \ \bcancel{- \vfunc(\traj_{\state_0,\tvar_0}^{\ctrlseq^*}(\tvar_e),\tvar_e)} \\
            &\qquad + \int_{\tvar_0}^{\tvar_e} r(\traj_{\state_0,\tvar_0}^{\ctrlseq^*}(\tvar), \ctrlseq^*(\tvar))d\tvar \  \bcancel{+ \vfunc(\traj_{\state_0,\tvar_0}^{\ctrlseq^*}(\tvar_e), \tvar_e)} \\
            &= \int_{\tvar_0}^{\tvar_e} \biggl[\frac{\partial}{\partial \tvar}\psi(\traj_{\state_0,\tvar_0}^{\ctrlseq^*}(\tvar),\tvar) + \frac{\partial}{\partial \state}\psi(\traj_{\state_0,\tvar_0}^{\ctrlseq^*}(\tvar),\tvar)^\top  \\
            & \qquad \qquad \qquad  \dyn(\traj_{\state_0,\tvar_0}^{\ctrlseq^*}(\tvar),\ctrlseq^*(\tvar)) + r(\traj_{\state_0,\tvar_0}^{\ctrlseq^*}(\tvar), \ctrlseq^*(\tvar)) \biggr] d\tvar \\
            &\leq -\theta(\tvar_e - \tvar_0)
        \end{aligned}
    \end{equation*}
    Since $\theta > 0$ and $(\tvar_e - \tvar_0)>0$, we have reached a contradiction. Therefore, we have $\frac{\partial\psi}{\partial \tvar}|_{(\state_0,\tvar_0)}+ \min_{\ctrl\in\csetsafe^*(\state_0,\tvar_0)}\{\frac{\partial\psi}{\partial\state}|_{(\state_0,\tvar_0)}^\top \dyn(\state_0,\ctrl)  + r(\state_0,\ctrl)\}\geq 0$, and $\vfunc$ is a viscosity supersolution. 

    The proof for the viscosity subsolution follows a similar argument, and we omit the proof for brevity. Because $\vfunc$ is both a viscosity supersolution and subsolution, it is a viscosity solution of the final-value problem for the HJB-PDE \eqref{eq:hjb_pde_perf} as we intended to show. 
\end{proof}

\subsection{Characterizing the Set of Safe Controls}\label{subsec: set_of_safe_controls}
We now introduce a method to characterize the set of safe controls using HJ reachability analysis. Given a state constraint \eqref{eq:state_constraint}, we first obtain the safety value function $\vfsafe$ \eqref{eq:safety_value_function} by solving the HJB-VI \eqref{eq:HJBVI}. Then, for $0<\gamma<<1$, we characterize the set of safe control at state $\state$ and time $\tvar$ in \eqref{eq:set_of_safe_controls_hj}, and we show that when interpreted as a set-value map, \eqref{eq:set_of_safe_controls_hj} is lower hemicontinuous. 

\begin{definition}[Set of Safe Controls Using HJ Reachability]
    \begin{equation} \label{eq:set_of_safe_controls_hj}
        \csetsafe(\state,\tvar)=\left\{\begin{array}{l}
        \cset \text { if } \vfsafe(\state, \tvar) > 0 \\ \\
        \{\ctrl\in\cset| -\gamma \leq \frac{\partial\vfsafe}{\partial \tvar} + \frac{\partial\vfsafe}{\partial \state}^\top \dyn(\state, \ctrl) \leq 0 \} \\  
        \quad \quad \quad \quad \quad \quad \quad \text { if } \vfsafe(\state, \tvar) \leq 0 \end{array}\right.
    \end{equation}
\end{definition}

\begin{note}
    \eqref{eq:set_of_safe_controls_hj} is only a set of safe control per Definition \ref{def:set_of_safe_controls} when $\gamma = 0$, since when $\gamma>0$, $\csetsafe(\state,\tvar)$ contains controls that would decrease the safety value and lead the system out of the safe set. In practice we do use $\gamma=0$, because the construct of $\gamma$ is only necessary to render $\csetsafe(\state,\tvar)$ lower hemicontinuous. There are other ways of constructing $\csetsafe(\state,\tvar)$ beyond \eqref{eq:set_of_safe_controls_hj}, but it is challenging to construct a $\csetsafe(\state,\tvar)$ satisfying all the following criteria: 1) safety-preserving, 2) maximal, and 3) lower-hemicontinuous. We consider \eqref{eq:set_of_safe_controls_hj} as the best option given the fact that it is maximal, safety-preserving in practice ($\gamma=0)$, and easy to explain and implement. 
\end{note}

\begin{figure}[tbh!]
    \centering
    \includegraphics[width=1\linewidth]{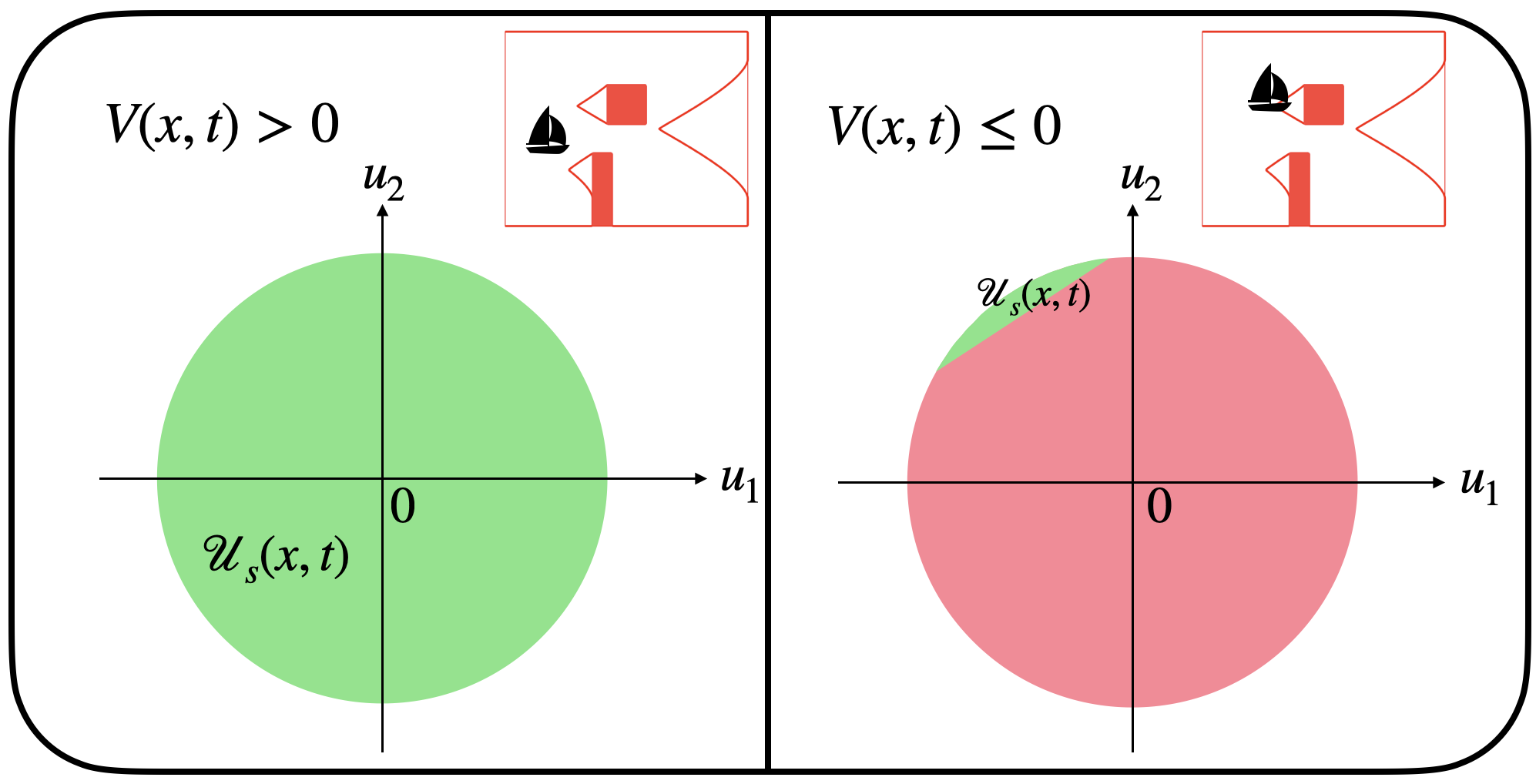}
    \caption{An illustration of the set of safe controls using Definition 1 of the boat system used in the case study. The control set $\cset$ is given by $\{[\ctrl_1,\ctrl_2]\in\real^2| \  ||[\ctrl_1,\ctrl_2]||_2\leq 1\}$. The green-shaded region is the set of safe controls at the current state $\state$ and time $\tvar$, and the pink-shaded region is the set of controls that would lead the system into eventual violation of the state constraint. The location of the system relative to the boundary of the safe set, denoted using solid red line, is also illustrated in the top right subfigure. (Left) When the system is in the interior of the safe  set, any admissible controls are permitted/safe. (Right) When the system is on boundary or outside the safe set, only admissible controls that lead to small decrease in the safety value are permitted. }
    \label{fig:set_safe_ctrls}
\end{figure}

Intuitively, when the system is not at risk of exiting the safe set $\safeset$, the system is allowed to take any admissible control, and the system will remain within $\safeset$. Since $\frac{\partial\vfsafe}{\partial \tvar} + \frac{\partial\vfsafe}{\partial \state}^\top \dyn(\state, \ctrl)$ is the total derivative of $\vfsafe$ with respect to $\tvar$ along a state trajectory resulting from the applied control $\ctrl$, we can see that as we take $\gamma\rightarrow 0$, $\csetsafe(\state,\tvar)$ consists of only controls that instantaneously keep the safety value constant, when the system is on the boundary of $\safeset$. Though there is no admissible control that can render the system safe as soon as it exits $\safeset$, we define $\csetsafe(\state,\tvar)$ to be identical to the previous case. By doing so, we limit the degree to which the state constraint is violated and potential consequences, when the system finds itself outside of the safe set $\safeset$. We show an illustration of the set of safe controls based on Definition. \ref{def:set_of_safe_controls} of the system used in the case study in Fig. \ref{fig:set_safe_ctrls}. Recall that in order for the value function to be a viscosity solution of the HJB-PDE \eqref{eq:hjb_pde_perf}, the set-value map $\csetsafe(\state,\tvar)$ is required to be lower hemicontinuous. We now show \eqref{eq:set_of_safe_controls_hj} is lower hemicontinuous. 

\begin{proposition}
    Suppose $\vfsafe(\state,\tvar)$ is continuously differentiable in $\state$ and $\tvar$. The set-value map defined in \eqref{eq:set_of_safe_controls_hj} is lower hemicontinuous in $\state$ and $\tvar$. 
\end{proposition}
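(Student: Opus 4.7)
The plan is to verify lower hemicontinuity through its sequential characterization: for every $(\state_0,\tvar_0)\in\sset\times\thor$, every $\ctrl_0\in\csetsafe(\state_0,\tvar_0)$, and every sequence $(\state_n,\tvar_n)\to(\state_0,\tvar_0)$, I will exhibit a tail sequence $\ctrl_n\in\csetsafe(\state_n,\tvar_n)$ with $\ctrl_n\to\ctrl_0$. The central object is the auxiliary function $g(\state,\tvar,\ctrl):=\frac{\partial\vfsafe}{\partial\tvar}(\state,\tvar)+\frac{\partial\vfsafe}{\partial\state}(\state,\tvar)^\top\dyn(\state,\ctrl)$. Two structural properties will be used throughout: $g$ is jointly continuous in $(\state,\tvar,\ctrl)$ by the $C^1$ hypothesis on $\vfsafe$ together with continuity of $\dyn$; and $g$ is \emph{affine in $\ctrl$} with continuous coefficients, since $\dyn(\state,\ctrl)=\dyn_1(\state)+\dyn_2(\state)\ctrl$, so that $g(\state,\tvar,\ctrl)=a(\state,\tvar)+b(\state,\tvar)^\top\ctrl$.

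I then split on the sign of $\vfsafe(\state_0,\tvar_0)$. If $\vfsafe(\state_0,\tvar_0)>0$, continuity of $\vfsafe$ forces $\vfsafe(\state_n,\tvar_n)>0$ eventually, so $\csetsafe(\state_n,\tvar_n)=\cset$ and the choice $\ctrl_n:=\ctrl_0$ is immediate. If $\vfsafe(\state_0,\tvar_0)<0$ and $\ctrl_0$ is \emph{strictly interior} to the constraint, i.e.\ $-\gamma<g(\state_0,\tvar_0,\ctrl_0)<0$, then joint continuity of $g$ combined with $\vfsafe(\state_n,\tvar_n)<0$ eventually places $\ctrl_0\in\csetsafe(\state_n,\tvar_n)$, so again $\ctrl_n:=\ctrl_0$ works.

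The main obstacle is the boundary regime $g(\state_0,\tvar_0,\ctrl_0)\in\{-\gamma,0\}$, possibly combined with $\vfsafe(\state_0,\tvar_0)=0$, where $\vfsafe(\state_n,\tvar_n)$ can approach zero from either side and $\csetsafe$ switches branches. My plan is a two-step perturbation. If $b(\state_0,\tvar_0)\ne 0$, affinity of $g$ in $\ctrl$ together with convexity of $\cset$ lets me slide $\ctrl_0$ along a direction that moves $b^\top\ctrl$ strictly inside the slab prescribed by $(-\gamma,0)-a(\state_0,\tvar_0)$, producing $\ctrl_0^{(k)}\to\ctrl_0$ with $g(\state_0,\tvar_0,\ctrl_0^{(k)})\in(-\gamma,0)$; a diagonal extraction then reduces the problem to the strictly interior case above. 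When $b(\state_0,\tvar_0)=0$, the functional $g(\state_0,\tvar_0,\cdot)\equiv a(\state_0,\tvar_0)$ is constant on $\cset$, so $\csetsafe(\state_0,\tvar_0)=\cset$; here the HJB-VI \eqref{eq:HJBVI} supplies $\max_{\ctrl\in\cset}g(\state_n,\tvar_n,\ctrl)\le 0$ whenever the dynamics branch is active at nearby points, while $b(\state_n,\tvar_n)\to 0$ on the bounded $\cset$ renders the slack inequality $g\ge-\gamma$ eventually automatic, so $\ctrl_0$ remains feasible at $(\state_n,\tvar_n)$.

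Combining the three regimes yields $\ctrl_n\to\ctrl_0$ with $\ctrl_n\in\csetsafe(\state_n,\tvar_n)$, verifying the sequential criterion and hence lower hemicontinuity. The decisive levers are the strictly positive slack $\gamma>0$ in the lower bound, which absorbs the $o(1)$ errors introduced by continuity near boundary points, and the control-affine form of $\dyn$, which makes the constraint on $\ctrl$ linear in $\ctrl$ and therefore amenable to convex approximation inside $\cset$.
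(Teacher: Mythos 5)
Your overall route is the same as the paper's: both arguments reduce lower hemicontinuity to a trichotomy on the sign of $\vfsafe(\state_0,\tvar_0)$, exploit that the constraint function $g(\state,\tvar,\ctrl)=a(\state,\tvar)+b(\state,\tvar)^\top\ctrl$ is affine in $\ctrl$ with coefficients continuous in $(\state,\tvar)$ (via $\vfsafe\in C^1$ and the control-affine dynamics), and use the positive width $\gamma$ of the slab $[-\gamma,0]$ to absorb the perturbation of $(a,b)$ at nearby points; your sequential formulation of lower hemicontinuity is equivalent to the paper's neighborhood formulation, so that difference is cosmetic. Where you go beyond the paper is in isolating the degenerate sub-case $g(\state_0,\tvar_0,\ctrl_0)\in\{-\gamma,0\}$: the paper's Case 2 simply asserts, from openness of the test set $\mathcal{A}$, that some $\ctrl_0\in\mathcal{A}\cap\csetsafe(\state,\tvar)$ satisfies the \emph{strict} inequalities $-\gamma<g<0$, which is precisely the point your regime 3 tries to justify.

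However, your resolution of that sub-case does not close. The sliding step for $b(\state_0,\tvar_0)\neq 0$ needs a point of $\cset$ near $\ctrl_0$ at which $b^\top\ctrl$ is strictly smaller than $b^\top\ctrl_0$ (when $g(\state_0,\tvar_0,\ctrl_0)=0$); by convexity of $\cset$ this exists unless $\ctrl_0$ minimizes $b(\state_0,\tvar_0)^\top\ctrl$ over $\cset$ and $\min_{\ctrl\in\cset}g(\state_0,\tvar_0,\ctrl)=0$, in which case $\csetsafe(\state_0,\tvar_0)$ is exactly the minimizing face and contains no control with strict slack. In that configuration one can have $\min_{\ctrl\in\cset}g(\state_n,\tvar_n,\ctrl)>0$ along a sequence $(\state_n,\tvar_n)\to(\state_0,\tvar_0)$ with $\vfsafe(\state_n,\tvar_n)\le 0$, so that $\csetsafe(\state_n,\tvar_n)=\emptyset$ and no admissible $\ctrl_n$ exists; the $C^1$ hypothesis does not exclude this, and the HJB-VI only bounds $\max_{\ctrl}g$ from below, not $\min_{\ctrl}g$ from above. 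Likewise, your $b=0$ branch needs the dynamics branch of the VI to be active at the nearby points to conclude $g(\state_n,\tvar_n,\ctrl_0)\le 0$, and you do not verify this (it can fail where $\tfunc(\state_n)=\vfsafe(\state_n,\tvar_n)$). To be fair, the paper's own proof silently assumes away the same degenerate configuration, so your sketch is no weaker than the published argument; but the step as you describe it would fail in the case above, and closing it requires either an added hypothesis (existence of a control with strict slack wherever $\csetsafe\neq\cset$) or a structural argument about $\vfsafe$ that neither you nor the paper supplies.
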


\begin{proof}

\textit{Case 1: $\vfsafe(\state,\tvar) > 0$.} In this case, $(\state,\tvar)$ is in the interior of the safe set $\safeset$, which we denote using $\interior{\safeset} =  \{(\state,\tvar)\in\sset\times\thor|\vfsafe(\state,\tvar)>0\}$. Take open set $\mathcal{A}\subset \cset$ such that $\mathcal{A}\cap\csetsafe(\state,\tvar)\neq \emptyset$. Since $\interior{\safeset}$ is open, $\exists \epsilon>0$ such that $\forall (\state', \tvar')\in\ball((\state,\tvar),\epsilon)$, we have $(\state', \tvar')\in\interior{\safeset}$. Then it follows that $\csetsafe(\state',\tvar') = \cset$, and $\csetsafe(\state',\tvar')\cap\mathcal{A} = \mathcal{A}\neq \emptyset$. Therefore, $\csetsafe(\state,\tvar)$ is lower hemicontinuous $\forall (\state,\tvar)\in\interior{\safeset}$. 

\textit{Case 2: $\vfsafe(\state,\tvar) < 0$.} In this case, $(\state,\tvar)$ is in the complement of the safe set $\safeset$, which we denote using $\overline{\safeset} = \{(\state,\tvar)\in\sset\times\thor|\vfsafe(\state,\tvar)< 0\}$. Take open set $\mathcal{A}\subset \cset$ such that $\mathcal{A}\cap\csetsafe(\state,\tvar)\neq \emptyset$. Since $\mathcal{A}$ is open and $\mathcal{A}\cap\csetsafe(\state,\tvar)\neq \emptyset$, $\exists \ctrl_0\in \mathcal{A}\cap\csetsafe(\state,\tvar)$ such that $-\gamma < \frac{\partial\vfsafe}{\partial \tvar} + \frac{\partial\vfsafe}{\partial \state}^\top \dyn(\state, \ctrl_0) < 0$.  Let $\nbhd = \ball((\state,\tvar),\delta)\subset \overline{\safeset}$ be an open ball centered at $(\state,\tvar)$. Take $(\state',\tvar')\in\nbhd$. Then for $\delta_1\in\real, \delta_2\in\real^{\sdim}, \delta_3\in\real^{\sdim}, \ \text{and} \ \delta_4\in\real^{\sdim\times\cdim} $, we have the following
\begin{equation}
    \begin{aligned}
        &\frac{\partial\vfsafe}{\partial \tvar}|_{(\state',\tvar')} + \frac{\partial\vfsafe}{\partial \state}|_{(\state',\tvar')}^\top \dyn(\state', \ctrl_0)\\
        &= \frac{\partial\vfsafe}{\partial \tvar}|_{(\state',\tvar')} + \frac{\partial\vfsafe}{\partial \state}|_{(\state',\tvar')}^\top\bigl[\dyn_1(\state')+\dyn_2(\state')\ctrl_0\bigr] \\
        &= \bigl[\frac{\partial\vfsafe}{\partial \tvar}|_{(\state,\tvar)} + \delta_1\bigr] + \bigl[\frac{\partial\vfsafe}{\partial \state}|_{(\state,\tvar)} + \delta_2 \bigr]^\top \biggl[\dyn_1(\state) \\
        & \qquad \qquad \qquad \qquad \qquad +\delta_3 + \bigl[\dyn_2(\state)+\delta_4\bigr]\ctrl_0\biggr] \\
        &= \bigl[ \frac{\partial\vfsafe}{\partial \tvar}|_{(\state,\tvar)} + \frac{\partial\vfsafe}{\partial \state}|_{(\state,\tvar)}^\top \dyn(\state,\ctrl_0)\bigr] \\
        &+ \delta_1 + \frac{\partial\vfsafe}{\partial \state}|_{(\state,\tvar)}^\top \delta_3 + \frac{\partial\vfsafe}{\partial \state}|_{(\state,\tvar)}^\top\delta_4\ctrl_0 + \delta_2^\top\dyn_2(\state)\ctrl_0 \\
        &+ \delta_2^\top\delta_4\ctrl_0 + \delta_2^\top\dyn_1(\state)+\delta_2^\top\delta_3
    \end{aligned}
\end{equation}

Using triangle inequality and definition of dot product, we have the following, where $||\cdot||$ denotes the Euclidean norm for a vector and the spectral norm for a matrix, and $|\cdot|$ denotes the absolute value of a real number.  

\begin{subequations}
    \begin{align}
    \begin{split}
        &\delta_1 + \frac{\partial\vfsafe}{\partial \state}|_{(\state,\tvar)}^\top \delta_3 + \frac{\partial\vfsafe}{\partial \state}|_{(\state,\tvar)}^\top\delta_4\ctrl_0 + \delta_2^\top\dyn_2(\state)\ctrl_0 \\
        &\qquad \qquad + \delta_2^\top\delta_4\ctrl_0 + \delta_2^\top\dyn_1(\state)+\delta_2^\top\delta_3
    \end{split} \\
    \begin{split}\label{eq:alt_safe_set_case_1_condition}
        &\leq |\delta_1| + ||\frac{\partial\vfsafe}{\partial \state}|_{(\state,\tvar)}|| \cdot ||\delta_3|| + ||\frac{\partial\vfsafe}{\partial \state}|_{(\state,\tvar)}|| \cdot ||\delta_4|| \\
        & \cdot ||\ctrl_0|| + ||\delta_2|| \cdot ||\dyn_2(\state)|| \cdot ||\ctrl_0|| + ||\delta_2|| \cdot ||\delta_4||  \\
        & \cdot ||\ctrl_0|| + ||\delta_2|| \cdot ||\dyn_1(\state)|| + ||\delta_2|| \cdot ||\delta_3||
    \end{split}
    \end{align}
\end{subequations}

Since $\vfsafe(\state,\tvar)$ is continuously differentiable in $\state$ and $\tvar$, and $\dyn_1(\state)$ as well as $\dyn_2(\state)$ are continuous in $\state$, we can choose $\delta$ such that $\forall (\state',\tvar')\in\nbhd=\ball((\state,\tvar),\delta)$ we have $\eqref{eq:alt_safe_set_case_1_condition} \leq \min\{\big\rvert\frac{\partial\vfsafe}{\partial \tvar}|_{(\state,\tvar)} + \frac{\partial\vfsafe}{\partial \state}|_{(\state,\tvar)}^\top \dyn(\state,\ctrl_0)\big\rvert, \big\rvert-\gamma-\frac{\partial\vfsafe}{\partial \tvar}|_{(\state,\tvar)} - \frac{\partial\vfsafe}{\partial \state}|_{(\state,\tvar)}^\top \dyn(\state,\ctrl_0)\big\rvert\}$, or equivalently $-\gamma \leq \frac{\partial\vfsafe}{\partial \tvar}|_{(\state',\tvar')} + \frac{\partial\vfsafe}{\partial \state}|_{(\state',\tvar')}^\top \dyn(\state', \ctrl_0) \leq 0$. We have shown that $\forall (\state',\tvar') \in \nbhd,  \ctrl_0\in\csetsafe(\state',\tvar')$, and as a result $\csetsafe(\state',\tvar')\cap\mathcal{A}\neq\emptyset$. Therefore, $\csetsafe(\state,\tvar)$ is lower hemicontinuous $\forall (\state,\tvar)\in\overline{\safeset}$. 

\textit{Case 3: $\vfsafe(\state,\tvar) = 0$.} In this case, $(\state,\tvar)$ is on the boundary of the safe set $\safeset$, which we denote using $\partial \safeset = \{(\state,\tvar)\in\sset\times\thor|\vfsafe(\state,\tvar) = 0\}$. Take open set $\mathcal{A}\subset \cset$ such that $\mathcal{A}\cap\csetsafe(\state,\tvar)\neq \emptyset$. We select $\epsilon>0$ and construct $\epsilon-$neighborhood around $(\state,\tvar)$, $\nbhd = \ball((\state,\tvar),\epsilon)$ such that $\forall (\state',\tvar')\in\nbhd\cap\overline{\interior{\safeset}}$, we have $\csetsafe(\state',\tvar')\cap\mathcal{A}\neq\emptyset$, using the argument presented above in Case 2. Note that $\forall (\state',\tvar')\in\nbhd\cap\interior{\safeset}$, we have $\csetsafe(\state',\tvar')=\cset$ and hence $\csetsafe(\state',\tvar')\cap\mathcal{A}=\mathcal{A}\neq\emptyset$. Since $(\nbhd\cap\overline{\interior{\safeset}}) \cup (\nbhd\cap\interior{\safeset})=\nbhd$, we have show that $\forall (\state',\tvar') \in \nbhd,  \csetsafe(\state',\tvar')\cap\mathcal{A}\neq\emptyset$, and $\csetsafe(\state,\tvar)$ is lower hemicontinuous $\forall (\state,\tvar)\in\partial\safeset$. 

We have exhausted all the cases, and therefore $\csetsafe(\state,\tvar)$ is lower hemicontinuous in $\state$ and $\tvar$. 
\end{proof}

\subsection{Synthesizing the Cooptimization Controller}
After obtaining the value function $\vfunc(\state,\tvar)$, we synthesize the closed-loop controller as follows 
\begin{equation}\label{eq:ctrl_synthesis}
    \controller^*(\state,\tvar) = \argmin_{\ctrl\in\csetsafe^*(\state,\tvar)}\{\frac{\partial\vfunc}{\partial\state}^\top \dyn(\state,\ctrl)+ r(\state,\ctrl)\}
\end{equation}

It is important to note that the controller synthesis problem \eqref{eq:ctrl_synthesis} is convex under the assumption that the running cost $r(\state,\ctrl)$ is convex in $\ctrl$ and the dynamics $\dyn(\state,\ctrl)$ are control-affine. Then it is clear that $\frac{\partial\vfunc}{\partial\state}^\top \dyn(\state,\ctrl)+ r(\state,\ctrl)$ is convex in $\ctrl$. Using the set of safe controls proposed in \eqref{eq:set_of_safe_controls_hj}, for $(\state,\tvar)$ such that $\vfsafe(\state,\tvar) > 0$, $\csetsafe(\state,\tvar)$ is the entire control space $\cset$, which is a convex set. On the other hand, for $(\state,\tvar)$ such that $\vfsafe(\state,\tvar) \leq 0$, $\csetsafe(\state,\tvar) = \{\ctrl\in\cset|\frac{\partial\vfsafe}{\partial \tvar} + \frac{\partial\vfsafe}{\partial \state}^\top \dyn(\state, \ctrl) = 0 \}$, the intersection of a hyperplane and a convex set, is also convex. Therefore, \eqref{eq:ctrl_synthesis} is an optimization problem with a convex objective and a convex constraint for any state $\state\in\sset$ and time $\tvar\in[\tinit,\tend)$. Very often $r(\state,\ctrl)$ depends quadratically on $\ctrl$ (e.g., to minimize the control energy), and for common choices of the control space $\cset$, such as hypercubes or Euclidean norm balls, \eqref{eq:ctrl_synthesis} is a quadratic program (QP) or a quadratically-constrained quadratic program (QCQP), both of which can be solved efficiently and reliably online.

\section{Case Study}
Since our method is ultimately solving a state-constrained optimal
control problem using dynamic programming, it is most similar to \cite{altarovici13}. 
To better compare against \cite{altarovici13}, we implement the numerical 
example from the paper with some minor modifications. The 2D system 
has the following dynamics $[\dot{x_1}, \dot{x_2}]^\top = [\ctrl_1 + 2 - 0.5 \state_2^2, \ctrl_2]$, with the control space $\cset = \{[\ctrl_1, \ctrl_2]\in\real^2| \ ||[\ctrl_1,\ctrl_2]||_2 \leq 1\}$.

The rectangular arena is given by $[-3,2]\times[-2,2]$, and the states outside of this arena are considered unsafe. There are two additional obstacles situated within the 
arena. The arena and the obstacle configuration are shown in Fig \ref{fig:boat2d_traj_comp}. 

The objective of the system is to minimize its distance to the goal location $[1.5,0]^\top$, giving rise to the cost functional $J(\state, \tvar, \ctrlseq) = \int_0^2 \sqrt{(x_1 - 1.5)^2 + x_2^2} \, d\tdummy$, while maintaining safety
(i.e. not running into the obstacles), over a time horizon of two seconds. We obtain the safety value function $\vfsafe(\state,\tvar)$ and the value function  $\vfunc(\state,\tvar)$ using LevelSetToolbox\cite{mitchell07lst} and HelperOC \cite{helperOC} using a grid size of $70\times70$. 

We evaluate our method and the baselines by synthesizing closed-loop control signals from 100 random initial states, and we focus primarily on 1) rollout success rate: the percent of trajectories that are safe over the entire time horizon, 2) rollout cost: the cost functional $\cost$ evaluated with the resulting control signals, and 3) offline and online computation time. 

The baselines we considered are Baseline 1) solving the state-constrained optimal control problem directly \cite{altarovici13}, Baseline 2) converting the state constraint \eqref{eq:state_constraint} into an obstacle penalty and solving the problem using Model Predictive Path Integral Control (MPPI) \cite{williams18mppi}, Baseline 3) performing safety filtering \cite{borquez23filtering} on the output of Baseline 2, and Baseline 4) solving the state-constrained optimal control problem in a receding horizon fashion (MPC).
The results are compiled in TABLE . \ref{table:boat2d_rollout_metrics}.

\begin{table}[h!]
\begin{center}
\caption{Comparison of metrics for our method and the baselines}
\label{table:boat2d_rollout_metrics}
\resizebox{9cm}{!}{
\begin{tabular}{|c|c|c|c|c|c|}
\hline
Method                                                                                                    & \begin{tabular}[c]{@{}c@{}}Our Method\\\end{tabular} & \begin{tabular}[c]{@{}c@{}}State-Constrained\\ Method\cite{altarovici13}\\(Baseline 1)\end{tabular} & \begin{tabular}[c]{@{}c@{}}MPPI\\ (Baseline 2)\end{tabular}    & \begin{tabular}[c]{@{}c@{}}MPPI+\\ Filtering \\ (Baseline 3)\end{tabular} & \begin{tabular}[c]{@{}c@{}}MPC\\ (Baseline 4)\end{tabular}     \\ \hline
\begin{tabular}[c]{@{}c@{}}Rollout success rate\end{tabular}                                           & 100\%                                                                  & 96\%                                                               & 19\%    & 100\%                                                     & 11\%    \\ \hline
\begin{tabular}[c]{@{}c@{}}\% of trajectories with \\ higher cost  compared \\to our method \end{tabular} & -                                                                      & 96.88\%                                                            & 84.21\% & 92\%                                                      & 72.73\% \\ \hline
\begin{tabular}[c]{@{}c@{}}Mean \% higher cost\\ compared to our method \end{tabular}                  & -                                                                      & 4.48\%                                                             & 14.96\% & 23.17\%                                                   & 2.13\%  \\ \hline
\begin{tabular}[c]{@{}c@{}}Offline computation time\end{tabular}                                      & 21 mins                                                                & 390 mins                                                           & -       & 3s                                                        & -       \\ \hline
\begin{tabular}[c]{@{}c@{}}Online computation time\end{tabular}                                       & 0.0015s                                                                & 0.0015s                                                            & 0.1s    & 0.1s                                                      & 0.6s    \\ \hline
\end{tabular}
}
\end{center}
\end{table}

We first analyze the rollout success rate, the metric indicative of the methods' ability to satisfy the safety requirement. In theory, Baseline 1 guarantees the satisfaction of the state constraint over the entire time horizon. However, Baseline 1 fails to achieved 100\% rollout success rate due to numerical inaccuracies that arise from the discretization of the state space. Baseline 2 and 4 performs poorly in this metric primarily due to the highly non-convex state constraint (disjoint obstacles in this case). On the other hand, our method and Baseline 3 are able to achieve 100\% rollout success rate. 

\begin{figure}[t!]
    \centering
    \includegraphics[width=1\linewidth]{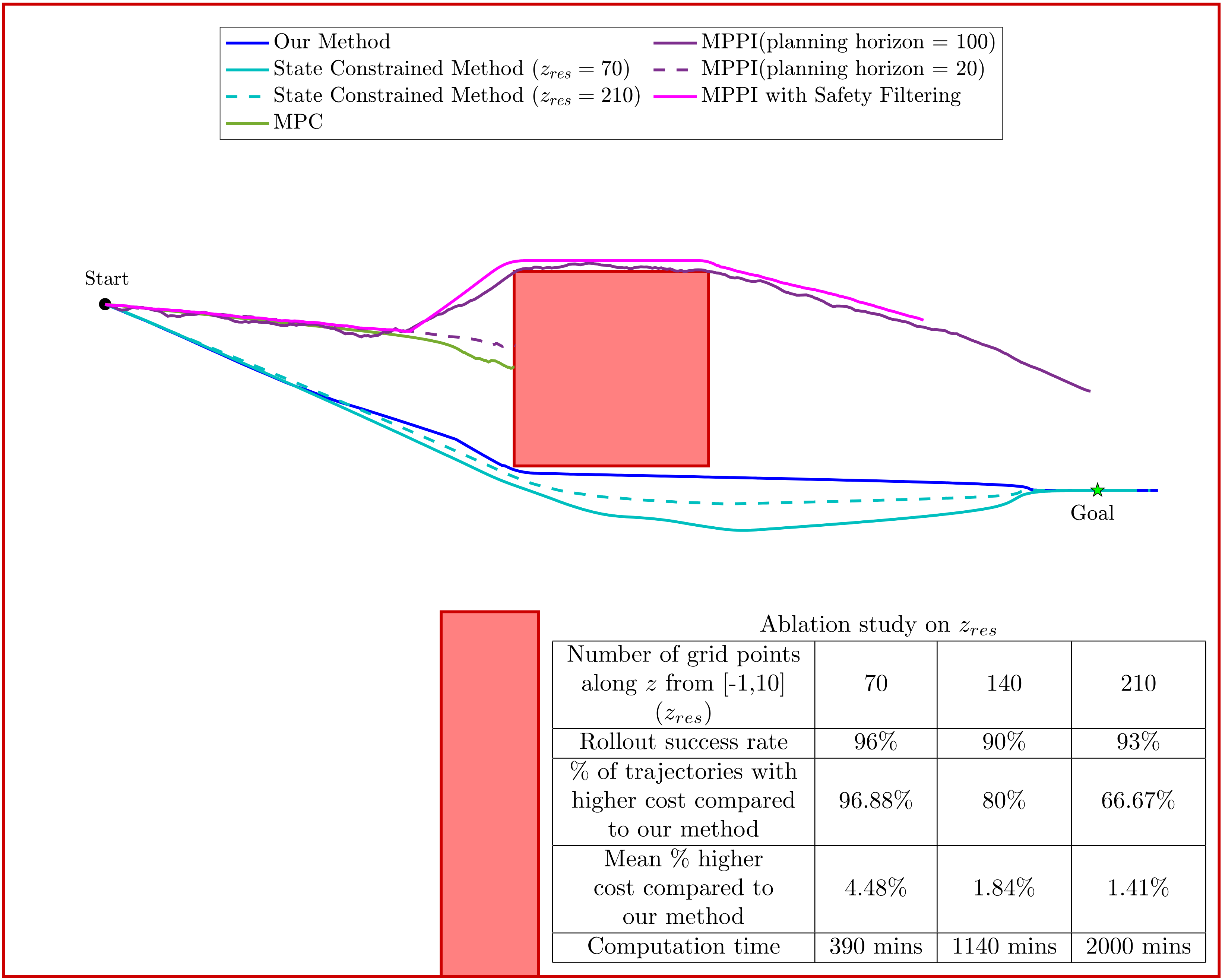}
    \caption{Trajectories from initial state $[-2.58, 0.77]^\top$.
    Costs incurred are (i) Our method: 3.51, (ii) State constrained method, $(z_{res} = 70)$: 3.62, (iii) State constrained method, $(z_{res} = 210)$: 3.52, (iv) MPPI (horizon = 20): Violates safety constraint (v) MPPI (horizon = 100): 4.55, (vi) MPPI with safety filtering: 4.95, (vii) MPC: Violates safety constraint }
    \label{fig:boat2d_traj_comp}
\end{figure}
 
In terms of the rollout cost, our method consistently outperforms Baseline 2 and 3 mostly due the fact that MPPI, in finite data regime, is only able to find locally optimal solution. Similarly, our method outperforms Baseline 4, because the non-convex optimization used in Baseline 4 is not solved to global optimum. Perhaps surprisingly, our method consistently outperforms Baseline 1. Though Baseline 1 and our method are computed using the same numerical tool, Baseline 1 is more severely affected by the discretization of the state space. Note that Baseline 1 augments its state space with an auxiliary state $z$ that is used to determine the actual value of the state. The discretization of the auxiliary state $z$ has a significant effect on the quality of the synthesized control signals, and we will demonstrate the effect using an ablation study on the number of grid points $z_{res}$ used in $z$'s dimension. The result of this ablation study is compiled in the bottom right table of Fig. \ref{fig:boat2d_traj_comp}. The performance of Baseline 1, in terms of trajectory cost, improves as the number of grid points in $z$ increases. However, the improvement of performance comes with a negative consequence of significant increase in offline computation time.    

We now examine the computation time. Compared to other methods, Baseline 1 and our method require the most offline computation, given the fact that the value functions are computed using dynamic programming on a grid \cite{mitchell07lst}. It is worthwhile to point out that our method requires significantly less offline computation, primarily due to the fact that our method does not require an auxiliary state as in \cite{altarovici13}, and the reduction in computation scales with the discretization of the auxiliary state. Though our method requires solving two optimal control problems, it is typically much faster than \cite{altarovici13} as the number of grid points used for the auxiliary state is much larger than 2. Baseline 3 requires some minimal offline computation for the safety value function. On the other hand, online methods Baseline 2 and Baseline 4 do not require any offline computation. For online computation time, Baseline 1 and our method outperform the rest of the baselines, as both methods solve quadratic programs, for which we use fast and reliable solver Gurobi \cite{gurobi}, for control synthesis online. 

We demonstrate the qualitative behaviors of the methods by showing the state trajectories, obtained using the synthesized closed-loop control signals over the entire time horizon, starting from a particular initial state in Fig. \ref{fig:boat2d_traj_comp}. The trajectory from our method is quite similar to that of Baseline 1, though the trajectory from Baseline 1 is slightly suboptimal for the aforementioned reasons. Baseline 2 and 3 unsurprisingly enter into a local minimum early on and are never able to recover. Baseline 4 fails to be safe as the corresponding optimization problem does not return the optimal solution satisfying the state constraint.

\section{Conclusion}
In this work, we proposed a method to synthesize controllers that cooptimize safety and performance for autonomous systems by formulating the problem as a control-constrained optimal control problem. We also show that the value function of the optimal control problem is a viscosity solution to a certain HJB-PDE. Although our method is shown to provide safety guarantee for the system and outperform other methods in terms of performance, our method has several drawbacks. First, while the theory is general, our method does not scale to high-dimensional systems. In the future, we will look into computing the value function using deep learning techniques \cite{bansal_deepreach, fisac_hj_rl}. Furthermore, to synthesize controllers, our method assumes that the safety value function $\vfsafe(\state,\tvar)$ and the value function $\vfunc(\state,\tvar)$ are differentiable everywhere, which is typically not the case. We will explore overcoming this challenge using a smooth overapproximation of the value functions \cite{borquez23filtering}.

\section*{Acknowledgement}
We would like to thank Sanat Mulay for his insights and help in the proof of Proposition 1.

\bibliographystyle{plain}
\bibliography{citations} 

\begin{thebibliography}{10}

\bibitem{helperOC}
{helperOC Library, 2019}.
\newblock \url{https://github.com/HJReachability/helperOC}.

\bibitem{altarovici13}
{Altarovici, Albert}, {Bokanowski, Olivier}, and {Zidani, Hasnaa}.
\newblock A general hamilton-jacobi framework for non-linear state-constrained control problems.
\newblock {\em ESAIM: COCV}, 19(2):337--357, 2013.

\bibitem{ames17cbfqp}
Aaron~D. Ames, Xiangru Xu, Jessy~W. Grizzle, and Paulo Tabuada.
\newblock Control barrier function based quadratic programs for safety critical systems.
\newblock {\em IEEE Transactions on Automatic Control}, 62(8):3861--3876, 2017.

\bibitem{bansal_deepreach}
Somil Bansal and Claire~J. Tomlin.
\newblock Deepreach: A deep learning approach to high-dimensional reachability.
\newblock In {\em 2021 IEEE International Conference on Robotics and Automation (ICRA)}, pages 1817--1824, 2021.

\bibitem{bharadhwaj2020_conservative_safety_critic}
Homanga Bharadhwaj, Aviral Kumar, Nicholas Rhinehart, Sergey Levine, Florian Shkurti, and Animesh Garg.
\newblock Conservative safety critics for exploration.
\newblock {\em arXiv preprint arXiv:2010.14497}, 2020.

\bibitem{borquez23filtering}
Javier Borquez, Kaustav Chakraborty, Hao Wang, and Somil Bansal.
\newblock On safety and liveness filtering using hamilton-jacobi reachability analysis.
\newblock {\em IEEE Transactions on Robotics}, pages 1--16, 2024.

\bibitem{capuzzo1990_hj_state_cons}
Italo Capuzzo-Dolcetta and P-L Lions.
\newblock Hamilton-jacobi equations with state constraints.
\newblock {\em Transactions of the American mathematical society}, 318(2):643--683, 1990.

\bibitem{evans2010pde}
Lawrence~C Evans.
\newblock {\em Partial Differential Equations}.
\newblock Graduate studies in mathematics. American Mathematical Society, 2010.

\bibitem{fialho94vi}
I.J. Fialho and T.T. Georgiou.
\newblock Worst case analysis of nonlinear systems.
\newblock {\em IEEE Transactions on Automatic Control}, 44(6):1180--1196, 1999.

\bibitem{fisac_hj_rl}
Jaime~F. Fisac, Neil~F. Lugovoy, Vicenç Rubies-Royo, Shromona Ghosh, and Claire~J. Tomlin.
\newblock Bridging hamilton-jacobi safety analysis and reinforcement learning.
\newblock In {\em 2019 International Conference on Robotics and Automation (ICRA)}, pages 8550--8556, 2019.

\bibitem{garcia1989_mpc}
Carlos~E Garcia, David~M Prett, and Manfred Morari.
\newblock Model predictive control: Theory and practice—a survey.
\newblock {\em Automatica}, 25(3):335--348, 1989.

\bibitem{gurobi}
{Gurobi Optimization, LLC}.
\newblock {Gurobi Optimizer Reference Manual}, 2024.

\bibitem{hsu23filtering}
Kai-Chieh Hsu, Haimin Hu, and Jaime~F Fisac.
\newblock The safety filter: A unified view of safety-critical control in autonomous systems.
\newblock {\em Annual Review of Control, Robotics, and Autonomous Systems}, 7, 2023.

\bibitem{lygeros04}
John Lygeros.
\newblock On reachability and minimum cost optimal control.
\newblock {\em Automatica}, 40(6):917--927, 2004.

\bibitem{mayne2000_state_cons_mpc}
D.Q. Mayne, J.B. Rawlings, C.V. Rao, and P.O.M. Scokaert.
\newblock Constrained model predictive control: Stability and optimality.
\newblock {\em Automatica}, 36(6):789--814, 2000.

\bibitem{mitchell07lst}
Ian~M Mitchell et~al.
\newblock A toolbox of level set methods.
\newblock {\em UBC Department of Computer Science Technical Report TR-2007-11}, page~31, 2007.

\bibitem{soner1986_state_constrained_oc}
Halil~Mete Soner.
\newblock Optimal control with state-space constraint i.
\newblock {\em SIAM Journal on Control and Optimization}, 24(3):552--561, 1986.

\bibitem{srinivasan2020_safe_critic}
Krishnan Srinivasan, Benjamin Eysenbach, Sehoon Ha, Jie Tan, and Chelsea Finn.
\newblock Learning to be safe: Deep rl with a safety critic.
\newblock {\em arXiv preprint arXiv:2010.14603}, 2020.

\bibitem{Wabersich23filtering}
Kim~P. Wabersich, Andrew~J. Taylor, Jason~J. Choi, Koushil Sreenath, Claire~J. Tomlin, Aaron~D. Ames, and Melanie~N. Zeilinger.
\newblock Data-driven safety filters: Hamilton-jacobi reachability, control barrier functions, and predictive methods for uncertain systems.
\newblock {\em IEEE Control Systems Magazine}, 43(5):137--177, 2023.

\bibitem{williams18mppi}
Grady Williams, Paul Drews, Brian Goldfain, James~M. Rehg, and Evangelos~A. Theodorou.
\newblock Information-theoretic model predictive control: Theory and applications to autonomous driving.
\newblock {\em IEEE Transactions on Robotics}, 34(6):1603--1622, 2018.

\end{thebibliography}

\end{document}